\newcommand\g{\ensuremath{\mathbf{g}}}
\newcommand\h{\ensuremath{\mathbf{h}}}
\newcommand\I{\ensuremath{(0, 1]}}
\newcommand\x{\ensuremath{\mathbf{x}}}
\DeclareMathOperator\grad{grad}
\DeclareMathOperator\supp{supp}
\newcommand\bel[1]{\begin{equation}\label{#1}}
\newcommand\ee{\end{equation}}
\newcommand\bital{\begin{enumerate}[label = {(\roman*)}, ref = {\roman*}, itemsep = 2mm] \vskip .2cm}
\newcommand\eital{\end{enumerate}}
\numberwithin{equation}{section}
\theoremstyle{plain}
\newtheorem{definition}{Definition}[section]
\newtheorem{Theorem}[definition]{Theorem}
\newtheorem{Lemma}[definition]{Lemma}
\newtheorem{Corollary}[definition]{Corollary}
\newtheorem{Conjecture}[definition]{Conjecture}
\theoremstyle{remark}
\newtheorem{remark}[definition]{Remark}
\newcommand{\R}{\ensuremath{\mathbb{R}}}
\newcommand{\N}{\ensuremath{\mathbb{N}}}
\newcommand{\G}{{\mathcal{G}}}
\newcommand{\eps}{\ensuremath{\varepsilon}}
\newcommand{\D}{\mathcal{D}}
\newcommand{\Cinf}{\ensuremath{\mathcal{C}^\infty}}
\newcommand{\E}{\ensuremath{{\mathcal E}}}
\newcommand{\EM}{\ensuremath{{\mathcal E}_{\mathrm{M}}}}
\newcommand{\NN}{\ensuremath{{\mathcal N}}}
\begin{document}
\title{The failure of the Ehlers--Kundt conjecture\\ in the impulsive case} 
\author{
 Moriz L. Frauenberger$^1$\thanks{{\tt a1203035@univie.ac.at}},\\
 James D. E. Grant$^2$\thanks{{\tt j.grant@surrey.ac.uk}},\\ 
 Roland Steinbauer$^1$\thanks{{\tt roland.steinbauer@univie.ac.at}}\\ \\
 $^1$Faculty of Mathematics, University of Vienna, \\
 Oskar-Morgenstern-Platz 1, 1090 Vienna, Austria. \\ \\
 $^2$School of Mathematics and Physics, University of Surrey,\\
 Guildford, GU2 7XH, UK. 
 }

\maketitle

\begin{abstract}
In 1962, Ehlers and Kundt conjectured that plane waves are the only class of complete Ricci-flat~\emph{pp}-waves, i.e.\ metrics on ${\mathbb R}^4$ of the form
\[ 
ds^2=2du\,dv+dx^2+dy^2+H(x,y,u)du^2\,.
\]
Recently, Flores and S\'{a}nchez gave a proof of the conjecture in the fundamental case of spatially polynomially bounded profile functions $H$. However, \emph{impulsive} \emph{pp}-waves, i.e., waves with concentrated profile functions of the form $H(x,y,u)=f(x,y)\,\delta(u)$ ($\delta$, the Dirac measure) have been found to be complete. We summarise completeness results for several classes of impulsive wave spacetimes achieved during the last years and discuss them in the context of the Ehlers--Kundt conjecture.
\bigskip

\noindent
\emph{Keywords:} pp-waves, impulsive gravitational waves, nonlinear distributional geometry, completeness
\medskip

\noindent 
\emph{MSC2010:} 
83C15, %Exact solutions
83C35, %Waves
46F30, %Colmbeau
83C10, % Equations of motion
34A36 %discontinuous equations 
 
\noindent
\emph{PACS numbers:} 
04.20.Jb, %Exact solutions
02.30.Hq, %ODE
\end{abstract}

\newpage
\section{Introduction}
The class of~\emph{pp-waves\/} is an important family of exact solutions to the Einstein equations. 
\begin{definition}[pp-wave]
\label{def:ppWave}
A~\emph{plane-fronted wave with parallel rays\/} is a Lorentzian manifold $(M, \g)$ with $M = \R^4 = \{ (u, v, \x) \mid u, v \in \R, \x = (x^1, x^2) \coloneqq (x, y) \in \R^2 \}$ with metric given by 
\begin{equation}
\label{eq:pp_metric}
\g = 2 \, du \, dv + H(u, \x) \, du^2 + \delta_{ij} \, dx^i \, dx^j,
\end{equation}
where $H \colon \R \times \R^2 \to \R$ is a smooth function.
\end{definition}
Such metrics first arose in the class studied by Brinkmann~\cite{B1925}, who investigated the problem of when an Einstein metric can be conformally mapped to another Einstein metric, and, slightly later, in the physics literature in the work of Baldwin--Jeffrey~\cite{BJ}. Unfortunately, these works do not seem to have attracted great attention at the time and, indeed, it is only after the discovery of such metrics by Peres~\cite{Peres1959}, 
H\'{e}ly~\cite{H59} and Kundt~\cite{Kundt1961} (see, also, \cite{BPR}) that the work of Brinkmann appears to have been recognised.%
\footnote{The first mention of Brinkmann's work that we are aware of is~\cite{Kundt1961}, while the work of Baldwin--Jeffrey appears to have been rediscovered in the 1970's. A discussion of the history of these developments can be found in~\cite{Robinson2019}.} %
The metrics~\eqref{eq:pp_metric} were first referred to as pp-waves in the 1962 paper of Ehlers and Kundt~\cite{EK62}.%
\footnote{Some historical information on the topic can be found in~\cite[Chapter~2.5]{EK62}. For a modern discussion, see~\cite[Section~1.1]{RAC2023}.} 

The Ricci tensor of the metric~\eqref{eq:pp_metric} is 
\[
\mathbf{Ric}_{\g} = -\frac{1}{2} \left( \frac{\partial^2 H}{\partial x^2} + \frac{\partial^2 H}{\partial y^2} \right) du \otimes du. 
\]
Given the form of the metric $\g$, it follows that the scalar curvature of $\g$ is identically zero. Therefore, the metric~\eqref{eq:pp_metric} is Ricci flat and satisfies the vacuum Einstein equations if and only if the function $H$ is harmonic in the spatial directions. 

A special class of pp-metrics are the so-called~\emph{plane waves\/} for which, for each fixed $u \in \R$, the map $\x \mapsto H(u, \x)$ is a quadratic form, i.e. $H(u, \x) = h_{ij}(u) x^i x^j$ where, without loss of generality, we take the matrix $h = \left( h_{ij} \right)$ to be symmetric. In this case, the metric~\eqref{eq:pp_metric} is Ricci-flat if and only if the matrix $h$ is trace-free, i.e. $h_{ii} = 0$. 

More generally, it can be shown~\cite[Lemma~3.1]{GL} that given any $(n+2)$-dimensional Lorentzian metric that admits a parallel null vector field $Z$ with the property that the curvature tensor, viewed as a linear map from bivectors to bivectors, is non-trivial but zero on restriction to $Z^{\perp} \wedge Z^{\perp}$,%
\footnote{Here $Z^{\perp} \coloneqq \{ X \mid \g(Z, X) = 0 \}$ is the annihilator of the vector field $Z$.} % 
there exist local coordinates $(u, v, x^1, \dots, x^n)$ in which the metric is locally of the form~\eqref{eq:pp_metric}. Imposing that $\nabla_U R = 0$ for all $U \in Z^{\perp}$ then yields (locally) the plane wave metrics, where the function $H(u, \x)$ is of the form $h_{ij}(u) x^i x^j$. 

\begin{remark}
The causal behaviour of plane wave metrics is quite well understood. It was shown by Penrose~\cite{Penrose_Remarkable} that plane wave metrics are~\emph{not\/} global hyperbolic. This result was refined in~\cite{EE1},%
\footnote{See, in particular, Theorem~6.9 and Proposition~5.12 in~\cite{EE1}.} %
where it was shown that (Ricci-flat) plane wave metrics are strongly causal, stably causal and causally continuous, but they are~\emph{not\/} causally simple (and, hence, not globally hyperbolic). 
\end{remark}

\subsection{$N$-fronted waves} 
We will pose our results on a generalisation of the above class of metrics introduced in~\cite{CFS2003} (see also~\cite{FS2003}), where we allow for wave surfaces that are described by a Riemannian manifold $(N, \h)$. We follow the terminology of~\cite{Saemann2012}, referring to such as $N$-fronted waves with parallel rays.

\begin{definition}[NPW]
\label{def:NPW}
An~\emph{$N$-fronted wave with parallel rays (NPW)\/} is a spacetime $(M, \g)$ where $M = N \times \R^2_1$ for an $(n-2)$-dimensional Riemannian manifold $(N, \h)$ such that the metric 
is given by
\begin{equation}
\label{eq:NPW_metric}
\g = 2 du \, dv + H(u, x) \, du^2 + \h,
\end{equation}
where $H \colon \R \times N \to \R$ is a smooth function.%
\footnote{Here, $u$ and $v$ will be global coordinates on $\R^2$. We will often denote local coordinates on $N$ by $x \equiv (x^1, \dots, x^{n-2})$.}
\end{definition} 

The Ricci tensor of the NPW metric~\eqref{def:NPW} is
\begin{equation}
\label{eq:R_ij}
\mathbf{Ric}_{\g} = \left( - \frac{1}{2} \bigtriangleup_{\h} H \right) du \otimes du + \mathbf{Ric}_{\h}, 
\end{equation}
where $\mathbf{Ric}_{\h}$ is the Ricci tensor of the $(n-2)$-dimensional metric $\h$, and $\bigtriangleup_{\h}$ is the Laplace--Beltrami operator of the metric $\h$. Therefore, an NPW satisfies the vacuum Einstein equations if and only if $(N, \h)$ is Ricci-flat and the function $H$ is spatially harmonic, i.e. 
\[
\bigtriangleup_{\h} H = 0.
\]

\begin{remark}
The results of Ehrlich and Emch~\cite{EE1} were extended and significantly strengthened for NPW metrics in~\cite{FS2020}, where it was shown that, if the manifold $N$ is non-compact, then quadratic behaviour of the function $H(u, x)$ asymptotically is critical (cf.\ the plane wave metrics). More specifically, it was shown that: 
\bital
\item If $-H(u, x)$ behaves sub-quadratically as $x \to \infty$, and the metric $\h$ is complete, then $(M, \g)$ is globally-hyperbolic; 
\item If $-H(u, x)$ grows at most quadratically as $x \to \infty$, and the metric $\h$ is complete, then $(M, \g)$ is strongly causal; 
\item Otherwise, $(M, \g)$ is causal, but may be non-distinguishing. 
\eital
\end{remark}

\subsection{Geodesics}
The geodesic equations for a curve $c \colon I \to M; s \mapsto (u(s), v(s), \x(s))$ in an NPW $(M, \g)$ are
\begin{align*}
\frac{d^2 u}{ds^2} &= 0, 
\\
\frac{d}{ds} \left( \frac{dv}{ds} + H \, \frac{du}{ds} \right) &= \frac{1}{2} \, \frac{\partial H}{\partial u} \, \left( \frac{du}{ds} \right)^2, 
\\
\nabla^{\h}_{\x'} \x' &= \frac{1}{2} \,\left( \frac{du}{ds} \right)^2 \, \grad^{\h} H. 
\end{align*}	
Solving the first of these equations, 
\begin{equation}
\label{usolution}
u(s) = a s + b, \qquad s \in \R, 
\end{equation}
where $a, b$ are constants. We then require that 
\begin{subequations}
\begin{gather}
\frac{d^2v}{ds^2} + \frac{a^2}{2} \frac{\partial H}{\partial u} + a \, \frac{\partial H}{\partial x^i} \, \frac{dx^i}{ds} = 0, 
\label{veqn}
\\
\nabla^{\h}_{\x'} \x' = \frac{a^2}{2} \, \grad^{\h} \! H. 
\label{xeqn}
\end{gather}
\label{geodeqns}\end{subequations}
If $a=0$, then $u(x) = u(0)$, so the geodesics are parallel to the wavefront, while $v(s) = c s + d$ and the curve $s \mapsto \x(s) \in N$ is a geodesic in $(N, \h)$. 

If $a \neq 0$, then $\frac{du}{ds} = a \neq 0$, so we may parametrise the geodesics by $u$. Equation~\eqref{xeqn} then becomes the equation for motion on $(N, \h)$ under a potential $V \coloneqq - \frac{1}{2} H$. Once one has solved this equation for $\x(u)$, one would then insert this solution into~\eqref{veqn}, and solve for $v(u)$.%
\footnote{In practice, the condition that a geodesic satisfy $\g(\gamma', \gamma') = \lambda$, gives an equation for $v'$ which, at least for plane waves, one can then integrate~\cite{EE1}.} 

\subsection{The Ehlers--Kundt conjecture}
In their 1962 article~\cite[pp.~97]{EK62}, Ehlers and Kundt proposed the following problem, suggesting that plane waves play a privileged role among pp-waves: 

\bigskip

{\small ``Prove the plane waves to be the only $g$-complete pp waves, no matter which topology one chooses.''}

\bigskip

\noindent{}This statement is now referred to as ``the Ehlers--Kundt conjecture.'' Intuitively speaking, a complete, Ricci-flat pp-wave describes a gravitational field without singularities (i.e. it is geodesically complete) and without internal matter sources (i.e. it satisfies the vacuum Einstein equations). By analogy with Maxwell’s equations in vacuum, where the simplest solutions are monochromatic plane waves, the Ehlers--Kundt conjecture asserts that the gravitational plane waves play the corresponding role as the most elementary, globally regular, source-free gravitational fields.

\medskip
In a more precise formulation, noting that only Ricci-flat metrics are considered in~\cite{EK62}, the conjecture can be stated as follows.
\begin{Conjecture}[Ehlers--Kundt]
A Ricci-flat pp-wave $(\R^4, g)$, with metric 
\begin{equation*}
\g = 2 du \, dv + H(u, \x) \, du^2 + dx^2 + dy^2,
\end{equation*}
where $H \colon \R \times \R^2 \to \R$, is geodesically complete if and only if the function $\x \mapsto H(u, \x)$ is polynomial of degree at most two, for all $u \in \R$.
\end{Conjecture}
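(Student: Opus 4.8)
The plan is to translate geodesic completeness of the spacetime into completeness of an associated spatial mechanical system, and then to analyse the latter directly. Following the geodesic equations above, any geodesic with $\dot u = a \neq 0$ may be reparametrised by $u$, after which the spatial component $\x(u) \in \R^2$ solves Newton's equation
\begin{equation*}
\frac{d^2 \x}{du^2} = \tfrac{1}{2} \, \grad H(u, \x),
\end{equation*}
that is, motion in the (time-dependent) potential $V = -\tfrac{1}{2} H$, while the remaining equation~\eqref{veqn} for $v(u)$ is linear in $v$ and integrates without obstruction once $\x(u)$ is known. Geodesics with $a = 0$ lie within a single wavefront and are automatically complete, since $\h$ is the complete Euclidean metric. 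Hence the first step is to record that the pp-wave is geodesically complete if and only if every maximal solution $\x(u)$ of the Newtonian system above extends to all $u \in \R$, and that Ricci-flatness enters as the condition that $H(u, \cdot)$ be harmonic on $\R^2$ for each fixed $u$.

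For the easy implication I would assume the quadratic form $H(u, \x) = h_{ij}(u) x^i x^j$ with $h(u)$ trace-free (as forced by harmonicity), so that $\grad H(u, \x) = 2 \, h(u) \x$ is linear in $\x$ and the Newtonian system reduces to the non-autonomous linear ODE $\ddot{\x} = h(u) \x$ with smooth, hence locally bounded, coefficients. Linear ODEs with continuous coefficients admit solutions on all of $\R$, so $\x(u)$, and with it $v(u)$, is global and completeness follows. This reproduces the plane-wave examples.

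The substantive direction is to show that completeness forces the profile to be quadratic. The key structural input I would exploit is that a harmonic function on $\R^2$ is the real part of a holomorphic function, $H(u, \cdot) = \Re F(u, \cdot)$; under a polynomial growth bound in $\x$, a Liouville-type argument forces $F(u, \cdot)$, and hence $H(u, \cdot)$, to be a polynomial of some degree $d(u)$. It then remains to exclude $d \geq 3$, and here the mechanism is finite-time blow-up driven by super-quadratic growth. In the model autonomous case $H = \Re(c z^d) = |c| \, r^d \cos(d\theta + \phi)$, the ray through the direction maximising $\cos(d\theta + \phi)$ is invariant, because the angular force vanishes there, and along it the motion obeys $\ddot r = \tfrac{1}{2} d \, |c| \, r^{d-1}$; since $d - 1 \geq 2 > 1$, such solutions reach $r = \infty$ in finite affine parameter, yielding an incomplete geodesic. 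For $d = 2$ the force is linear and no blow-up occurs, which matches the conjectured threshold.

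The hard part will be the genuine $u$-dependence of $H$. Once $H$ varies with $u$, the ``energy'' $\tfrac{1}{2} |\dot{\x}|^2 - \tfrac{1}{2} H$ is no longer conserved and the clean invariant-ray blow-up breaks down; one must instead control, through Gronwall- and comparison-type estimates, whether the time-dependence can conspire to suppress the escape forced by a super-quadratic profile at a single slice $u_0$. Establishing this under a polynomial bound on $H$ is precisely the content of the Flores--Sánchez result, while removing the polynomial-boundedness hypothesis entirely, so as to admit non-polynomial harmonic profiles such as $\Re(e^z)$, is exactly where the general conjecture remains open. I would therefore expect the crux of any complete proof to be a sharp, $u$-uniform blow-up criterion for the non-autonomous Newtonian system that does not rely on a conserved energy.
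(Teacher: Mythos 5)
The first thing to say is that the statement you were asked to prove is not a theorem of the paper at all: it is stated as a \emph{Conjecture}, and the paper explicitly records (citing~\cite{RAC2023}) that it remains an open problem. The paper contains no proof of it; its positive content is the Flores--S\'anchez theorem~\cite{FS2020}, which settles only the special case where the potential $V=-\frac{1}{2}H$ is polynomially $u$-bounded, and the paper's own contribution runs in the opposite direction, showing that the natural extension of the statement to impulsive (distributional-in-$u$) profiles is \emph{false}. So no blind proof attempt could legitimately succeed here, and yours, read carefully, does not actually claim to: it is an outline of the known partial results together with an honest admission of where the difficulty lies.

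Concretely, there are two gaps, and they are exactly the open parts of the problem. First, your Liouville step --- ``under a polynomial growth bound, harmonicity forces $H(u,\cdot)$ to be a polynomial'' --- imports a hypothesis that is not in the statement. Harmonic functions on $\R^2$ need not be polynomial (e.g.\ $\Re(e^z)$), and for such profiles you provide no incompleteness mechanism at all; your invariant-ray blow-up argument is only carried out for the homogeneous model $H=\Re(cz^d)$. Assuming the polynomial bound reduces the claim to precisely the polynomial Ehlers--Kundt conjecture, i.e.\ to the already-known theorem of~\cite{FS2020}, not to the statement in question. Second, for genuine $u$-dependence you say one ``must control, through Gronwall- and comparison-type estimates, whether the time-dependence can conspire to suppress the escape,'' and that you ``would expect the crux'' to be a sharp $u$-uniform blow-up criterion --- that is, the essential step is named but not performed. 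What you have written correctly handles the easy implication (plane waves are complete, via global existence for linear ODEs, matching the paper's Remark~\ref{rem:pwgc}) and the reduction to the Newtonian system, and it accurately describes the state of the art; but it is not a proof, and the paper does not contain one either, because none is known.
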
 
\begin{remark}
\label{rem:pwgc}
The geodesic equations for the plane wave metrics can be integrated directly,% 
\footnote{Up to integration of a linear second-order system of ordinary differential equations.} %
and show that plane wave the metrics are geodesically complete. (See~\cite[Theorem~2-5.9]{EK62} or~\cite[Chapter~13]{BEE}.)
\end{remark}

While quadratic behavior of $H$ has been successfully shown to be critical for various geometric properties of the pp-wave spacetime, e.g.\ asymptotic flatness and causality (cf.~\cite{FS2003}), the Ehlers--Kundt conjecture remains an open problem (cf.~\cite[Section 5]{RAC2023}). 

\subsection{Positive results}
A special case of the Ehlers--Kundt conjecture, the so-called~\emph{polynomial Ehlers--Kundt conjecture}, in which $H$ is assumed to be bounded by a polynomial, has been solved by Flores and S\'{a}nchez in~\cite{FS2020}. They work in the setting of dynamical systems and, more precisely, assume $V = - \frac{1}{2}H$ to be polynomially $u$-bounded.

\begin{definition}
A function $V \colon \R^2 \times \R \to \R$ is~\emph{polynomially $u$-bounded\/} if, for all $u \in \R$, there exists an $\eps_0$ and a polynomial $q_0 \in \R[x,y]$ such that, for all $(x, y, u) \in \R^2 \times (u_0-\eps_0, u_0 + \eps_0)$, we have 
\[
V(x,y,u) \leq q_0(x,y).
\]
\end{definition}

With this notion at hand, we can state the polynomial Ehlers--Kundt conjecture as follows.

\begin{Theorem}[Polynomial Ehlers--Kundt Conjecture]
Let $(\R^4, \g)$ be a pp-wave with the properties that 
\bital
\item the function $V = - \frac{1}{2}H$ is in $\mathcal{C}^1(\R^2 \times \R)$ and polynomially $u$-bounded; 
\item for all $u \in \R$, $\x \mapsto V(\x, u)$ is in $\mathcal{C}^2(\R^2)$ and harmonic. 
\eital
\medskip
Then $(\R^4, \g)$ is geodesically complete if and only if $V(\cdot,\cdot,u)$ is an at most quadratic polynomial for all $u \in \R$, i.e.\ $(\R^4,g)$ is a plane wave.
\end{Theorem}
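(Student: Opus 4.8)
The plan is to reduce geodesic completeness to completeness of the associated Newtonian dynamical system on the flat wave surface, and then to split the problem into a static part, governed by harmonicity, and a dynamical part, governed by completeness. First I would dispose of the trivial directions of the geodesic flow: for $a = 0$ the curves are complete by \eqref{usolution} and the discussion following \eqref{geodeqns}, and the ``if'' direction (at most quadratic $\Rightarrow$ complete) is exactly Remark~\ref{rem:pwgc}. It remains to prove the ``only if'' direction for $a \neq 0$. Reparametrising by $u$, equation~\eqref{xeqn} becomes Newton's equation $\x''(u) = -\grad V(\x(u), u)$ on $\R^2$ (the factor $a$ drops out), while~\eqref{veqn} can be integrated by quadrature once $\x(u)$ is known and $V \in \mathcal{C}^1$. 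Hence geodesic completeness is equivalent to the statement that every maximal solution of this Newtonian system extends to all $u \in \R$, and it suffices to show that completeness forces $\x \mapsto V(\x, u)$ to be a polynomial of degree at most two for every $u$.

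Next, the static step. At each fixed $u$ the profile $V(\cdot, u)$ is harmonic on $\R^2$ and, by polynomial $u$-boundedness, bounded above on all of $\R^2$ by a fixed polynomial $q_0$ for $u$ in a neighbourhood of any given $u_0$. Writing $V(\cdot, u) = \operatorname{Re} F$ for an entire holomorphic $F$ (the two-dimensional harmonic--holomorphic correspondence), a one-sided Liouville argument---concretely the Borel--Carath\'eodory inequality, which bounds $\max_{|z|=r}|F|$ in terms of $\max_{|z|=2r} \operatorname{Re} F$---turns the one-sided bound $V \leq q_0$ into a genuine two-sided growth bound $|F(z)| \leq C|z|^{\deg q_0}$, whereupon the Cauchy estimates force $F$, and hence $V(\cdot, u)$, to be a polynomial of degree at most $\deg q_0$. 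Thus each $V(\cdot, u)$ is automatically polynomial, with degree bounded locally uniformly in $u$; the entire content of the conjecture is now to show this degree never exceeds two.

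The dynamical step is the crux, and I would argue by contraposition: if $V(\cdot, u_\ast)$ has degree $d \geq 3$ for some $u_\ast$, I want to produce a solution $\x(u)$ escaping to spatial infinity in finite $u$-time, contradicting completeness. The model is the one-dimensional equation $\ddot x = x^{\,d-1}$, whose energy relation $\tfrac12 \dot x^2 \sim \tfrac1d x^{d}$ yields the convergent escape integral $\int^{\infty} x^{-d/2}\,dx < \infty$ precisely when $d > 2$, while $d = 2$ is the borderline exponential (complete) case matching the plane waves. Realising this in our setting means aiming the initial data along a ray on which the leading harmonic part $P_d = \operatorname{Re}(c\, z^d)$ tends to $-\infty$---such a ray always exists, since a non-constant harmonic function is unbounded below---so that the radial force $-\partial_r V \sim d\,r^{\,d-1}$ drives super-linear acceleration outward.

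The main obstacle is exactly this last step, for two intertwined reasons. The potential is genuinely time-dependent, so energy is not conserved and the autonomous blow-up computation does not apply verbatim; one must confine the construction to a short $u$-interval on which, by continuity and the local degree bound from the static step, the leading coefficient and the escape geometry are stable, and then run a comparison or barrier estimate forcing blow-up before the profile can change character. Moreover the dynamics are two-dimensional and a harmonic profile is a saddle (it has no local extrema, since $V$ is harmonic), so the transverse component of the force may deflect the trajectory off the chosen ray; controlling this transverse drift while maintaining radial escape is the technical heart of the argument. Once incompleteness is established for every super-quadratic profile, the contrapositive gives degree $\leq 2$, and harmonicity makes the quadratic part trace-free, so that $(\R^4, \g)$ is a plane wave, as claimed.
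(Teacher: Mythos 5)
Your overall route is the same as the paper's, which in turn is only a sketch of the actual proof in Flores--S\'anchez~\cite{FS2020}: reduce to the Newtonian system $\x'' = -\grad V(\x,u)$ via~\eqref{xeqn}, dispose of the $a=0$ geodesics and of the ``if'' direction by Remark~\ref{rem:pwgc}, and, for the ``only if'' direction, produce trajectories escaping to spatial infinity in finite $u$-time along rays on which the leading part of $V$ tends to $-\infty$. Your static step is a correct and genuinely useful addition: the Borel--Carath\'eodory/one-sided Liouville argument showing that each harmonic $V(\cdot,u)$ bounded above by a polynomial is itself a polynomial of degree at most $\deg q_0$ makes explicit a reduction that the paper's sketch leaves implicit (the sketch starts directly from a polynomial, indeed homogeneous, potential).

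The gap is the dynamical step, and you say so yourself: ``controlling this transverse drift while maintaining radial escape is the technical heart of the argument'' is a description of the problem, not a proof. This is exactly where the cited proof has its content: writing the homogeneous part as $V(\rho,\theta) = -\rho^n\cos(n\theta)$, one takes the rays $\theta_k = \frac{2\pi k}{n}$ and the sectors $D_k[\rho_0,\frac{\pi}{2n}]$ of angular half-width $\frac{\pi}{2n}$ about them; inside such a sector $\cos(n\theta)>0$, so the radial force $-\partial_\rho V = n\rho^{n-1}\cos(n\theta)$ points outward, while the angular force $-\rho^{-1}\partial_\theta V = -n\rho^{n-1}\sin(n\theta)$ is \emph{restoring} toward $\theta_k$. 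Your worry about the saddle deflecting the trajectory is therefore misdirected in one respect: although a planar harmonic function has no local extrema, the rays $\theta_k$ are angular minima (valleys) of $V$ at each fixed radius, so the transverse force near them pushes back toward the ray rather than off it. But exploiting this is precisely what the invariance of $D_k[\rho_0,\frac{\pi}{2n}]$ accomplishes, and proving that invariance (including the Coriolis-type terms in the polar equations of motion), establishing the finite-time escape estimate inside the region, and then redoing the construction for non-homogeneous and non-autonomous potentials --- where, as you note, energy is not conserved and suitable radial curves $\gamma_k$ and regions $D[\rho_0,\theta_+]$ must be built afresh --- constitutes the bulk of~\cite{FS2020}. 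You identify the correct model equation $\ddot{x}=x^{d-1}$ and the correct threshold $d>2$, but without constructing an invariant region and running the comparison argument inside it, the contrapositive (degree $\geq 3$ implies incompleteness) is not established; ``a comparison or barrier estimate forcing blow-up before the profile can change character'' remains a plan, not an argument.
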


%The proof in~\cite{FS2020} in particular uses the following fact from harmonic analysis: if a harmonic function is bounded by a polynomial of degree $n$, then it itself is a polynomial of degree at most $n$. (See~\cite{ShortProofs} for a short proof).\jg{Is this needed?}

If $V$ is autonomous and homogeneous (i.e.\ all terms are of the highest degree) then, by a suitable choice of polar coordinates, it can be written as
\[
V(\rho, \theta) = - \rho^n \cos(n \theta).
\]
In this case, it can be shown that there exist neighbourhoods $D_k[\rho_0,\frac{\pi}{2n}]$ of the radial trajectories $\gamma_k(s) = (\rho(s), \theta_k)$ for $\theta_k = \frac{2\pi k}{n}$, such that trajectories starting in $D_k[\rho_0,\frac{\pi}{2n}]$ stay within this region and are incomplete.

For the non-homogeneous and non-autonomous case, the proof works analogously by constructing appropriate radial curves $\gamma_k$ and regions $D[\rho_0,\theta_+]$ with the corresponding properties.

\medskip

As stated in~\cite[pp.~7509]{FS2020}, and briefly alluded to above, the natural setting for the Ehlers--Kundt conjecture (and for the more general problems discussed in~\cite{FS2020}) is where the potential function $V \equiv - \frac{1}{2} H$ is $C^1$ in $u$, in order that the geodesic equations for the metric $\g$ make sense classically, and $C^2$ in the spatial variables, in order that the spatial-harmonicity condition on $H$ be classically well-defined. What we will show is that, for~\emph{impulsive\/} waves, where the function $H$ is not continuous in $u$, the Ehlers--Kundt conjecture fails.

\section{Completeness for impulsive waves}

\subsection{Impulsive gravitational waves}

In this section, we turn to~\emph{impulsive\/} waves or, more precisely, to the impulsive version of the metric~\eqref{eq:NPW_metric}. Generally, impulsive gravitational waves model short but violent pulses of gravitational or other radiation. They were introduced by Penrose in the late 1960's and considered in detail in~\cite{Pen:72}. Over the decades the models have been widely generalised to include a non-vanishing cosmological constant as well as gyratonic terms~\cite[Ch.~20]{GP:09}. They have been found to arise as ultrarelativistic limits of Kerr--Newman and other static spacetimes (see, e.g.,~\cite{AS:71,HT:93,LS:94,Bal:97}), which makes them interesting models for quantum scattering in general relativistic spacetimes (see, e.g.,~\cite{Bla:11,S:18}), the wave memory effect~\cite{ZDH:impulsive,S:19} and entanglement harvesting (see~\cite{GKMTT:21} and references therein). Moreover, they have found applications in astrophysics~\cite{BH:03}, and serve as key examples in mathematical investigations of low-regularity spacetimes. (See~\cite{SV:06} and forthcoming works in synthetic Lorentzian geometry.)

Following~\cite{P:02}, the simplest way to introduce impulsive waves in our context is to start with metric~\eqref{eq:NPW_metric} and set
\medskip
\bel{ieprofile}
H(x, u) = f(x) \delta(u) \,,
\ee

\medskip
\noindent{}where $\delta$ denotes the Dirac function and $f$ is a smooth function on $N$, see~\cite{SS:12,SS:15}. (Note that the Ricci tensor of the metric does not include $u$-derivatives of $H$ (see~\eqref{eq:R_ij}), so the field equations do not restrict its $u$-behavior.) Therefore, we consider the spacetime $M = N \times \R^2_1$ and metric~\eqref{eq:NPW_metric} with $H$ as in~\eqref{ieprofile}. More explicitly we define the following.

\begin{definition}[INPW]
\label{def:INPW}
An~\emph{impulsive NPW\/} (or~\emph{INPW}) is a spacetime $(M, \g)$ with $M = N \times \R^2_1$, $(N, \h)$ an $(n-2)$-dimensional Riemannian manifold and
with metric
\begin{equation}
\label{eq:INPW_metric}
\g = 2 du \, dv + f(x)\delta(u) \, du^2 + \h,
\end{equation}
where $f \colon N \to \R$ is a smooth function and $\delta$ is the Dirac function.
\end{definition}

While we can view this metric as a(n impulsive) limit of sandwich waves with ever shorter but stronger profile function, it clearly is outside of the Geroch--Traschen class~\cite{GT:87}, the largest class of metrics that allow for a stable definition of distributional curvature~\cite{LFM:07,SV:09}. The high degree of symmetry allows one to calculate the curvature of the metric~\eqref{eq:INPW_metric} directly in distributions without encountering any ill-defined products to obtain, 
\begin{equation}
\mathbf{Ric}_{\g} = - \frac{1}{2}\, \delta(u)\, \bigtriangleup_{\h} f(x) + \mathbf{Ric}_{\h}, 
\end{equation}
(cf.~\eqref{eq:R_ij}). Nevertheless, for the sake of mathematical rigour, we will study~\emph{regularisations\/} of the metric. In addition, however, the form of geodesic equations~\eqref{geodeqns} with $H$ of the form~\eqref{ieprofile} (or~\eqref{geodeqns-delta-eps} below) implies that there is no distributional solution concept available for the geodesic equations of the metric~\eqref{eq:INPW_metric}. Moreover, this approach via regularisation is consistent with our view of~\eqref{eq:INPW_metric} as an impulsive limit.

Note that, at least for the special case of impulsive pp-waves, one can explicitly find a so-called ``continuous form'' of the metric. In particular, Penrose argued in~\cite{Pen:72} that the distributional Brinkmann form of impulsive waves and its Lipschitz continuous Rosen form are related by a ``discontinuous coordinate transformation''. Clearly, the latter metric~\emph{does\/} lie in the Geroch--Traschen class, thereby allowing a classical treatment of its curvature and the field equations. A mathematically rigorous account of this transformation is given in~\cite{KS:99a}, and recently for the case of $\Lambda \neq 0$ in~\cite{SSSS:24}. In addition, the ``continuous form'' of the metric is, in fact, locally Lipschitz continuous. This is essential when one studies its geodesics~\cite{Ste:14,LLS:21} and its causality~\cite{CG:12}. Indeed, Lipschitz regularity has turned out to be the threshold for classical causality theory to hold, and several key properties fail below it~\cite{GKSS:20}. (However, see~\cite{Lin:25}). Also, there are hints that an additional bound on the curvature in $L^p$ ($p>2\dim(M)$) allows one to extend standard causality to metrics in $W^{2,p}$ using the RT-transformations of~\cite{RT:22,RT:25}.

\subsection{Regularisation and the geodesic equation}

We now introduce the regularisation approach mentioned above, which we will use throughout the remainder of the paper. We first introduce a very general regularisation of the Dirac delta. 

\begin{definition}
\label{def:StrictDeltaNet}
A net $(\delta_{\eps})_{\eps \in \I}$ of compactly supported smooth functions, is called a~\emph{strict delta net\/} if: 
\begin{enumerate}
\item $\supp(\delta_{\eps}) \to \{0\}$ as $\eps \to 0$; 
\item $\int_{\R} \delta_{\eps}(x) \, dx \to 1$ as $\eps \to 0$; 
\item $\exists K > 0$ such that $\forall \eps \in \I$, $\int_{\R} |\delta_{\eps}(x)| \, dx \le K$.
\end{enumerate}
\end{definition}

We clearly have that $\delta_\eps \to \delta$ weakly as $\eps\to 0$. More generally, one may use convolution with $\delta_\eps$ to regularise distributions $u \in \mathcal{D}'(\R)$ via
\begin{equation}\label{eq:conv}
u*\rho_\eps(x) \coloneqq \langle u(x-y),\delta_\eps(y)\rangle, 
\end{equation}
where $\langle\ ,\ \rangle$ denotes the distributional action (in the variable $y$). (We will follow the notation of~\cite{FJ:98} for distribution theory.) For each $\eps \in I$, $u*\rho_\eps$ is a smooth function (of $x$). This family of functions weakly converge to $u$ as $\eps \to 0$. 

\medskip
We may now define the regularised spacetimes that we will use. 

\begin{definition}[rINPW]
\label{def:rINPW}
A~\emph{regularised impulsive NPW\/} (or~\emph{rINPW}) is a spacetime $(M,\g_\eps)$ with $M = N \times \R^2_1$, $(N, \h)$ an $(n-2)$-dimensional Riemannian manifold, with metric
\begin{equation}
\label{eq:rINPW_metric}
\g_\eps = 2 du \, dv + f(x)\delta_\eps(u) \, du^2 + \h,
\end{equation}
where $f \colon N \to \R$ is a smooth function, $\delta_\eps$ is a model delta net, and $\eps\in \I$.%
\footnote{For brevity, we will use the term rINPW for both a specific spacetime $(M, \g_\eps)$ for some fixed $\eps \in \I$, as well as for the entire family of spacetimes $(M,\g_\eps)$ with $\eps \in \I$.}
\end{definition}

The geodesic equations for~\eqref{eq:rINPW_metric} are simply the regularisations of the geodesic equations of the metric~\eqref{eq:INPW_metric}. As above, we can entirely dispense with the $u$-equation: If $a=0$ the geodesic is parallel to the impulsive wave surface at $u=0$, and therefore never meets the wave. Moreover, for sufficiently small $\eps$, the geodesic will be outside of the regularisation region given by the support of $\delta_\eps$. Therefore, these geodesics reduce to geodesics of the `background` flat spacetime without any wave. In the case $a \neq 0$, on the other hand, we may use $u$ as an affine parameter along the geodesic and solely work with the $\x$- and $v$-equations
\begin{subequations}
\begin{gather}
\frac{d^2v}{ds^2} + \frac{a^2}{2} f \, \dot\delta_\eps(u) + a \, \frac{\partial f}{\partial x^i} \, \frac{dx^i}{ds} \delta_\eps(u) = 0, 
\label{veqn-eps}
\\
\nabla^{\h}_{\x'} \x' = \left( \frac{a^2}{2} \, \grad^{\h} \! f \right) \delta_\eps(u). 
\label{xeqn-eps}
\end{gather}
\label{geodeqns-delta-eps}\end{subequations}
As observed above solvability of the system~\eqref{geodeqns-delta-eps} rests solely on the solvability of~\eqref{xeqn-eps}, since~\eqref{veqn-eps} is decoupled and linear and can therefore be solved by integration once $\x$ is known. 

\subsection{Completeness of rINPWs}

We now prove completeness of rINPWs for all `profile functions' $f$ in the following sense: Consider any geodesic $\gamma$ in the family of spacetimes $(M,\g_\eps)$ with initial data $\gamma(t_0)$, $\dot\gamma(t_0)$ at some parameter value $t_0$ (independent of the regularisation parameter $\eps$) such that $\gamma(t_0)$ lies outside of the support of $\delta_\eps$. Then we show that there exists an $\eps_0$ such that, for all $\eps \le \eps_0$, $\gamma$ is defined on all of $\R$ in any of the spacetimes $(M,\g_\eps)$. Note, however, that $\eps_0$ will, in general, depend on the data $\gamma(t_0)$, $\dot\gamma(t_0)$. The precise statement, in which we use the coordinate version of the geodesic equation~\eqref{geodeqns-delta-eps} and $\Gamma^{(N) k}_{ij}$ denotes the Christoffel symbols of $N$, is as follows.

\begin{Theorem}
\label{thm:reg}
Let $(N, \h)$ be a complete Riemannian manifold. Given data $(x_0,v_0) \in N \times \R$, $(\dot{x}_0,\dot{v}_0) \in T_{x_0}N \times \R$, there exists $\eps_0>0$ such that the initial value problem 
\label{eq:GEQ_proof}
\begin{align}\label{eq:ro-veq}
\ddot{v}_{\eps} &= -\delta_{\eps} \frac{\partial f}{\partial x^j}(x_{\eps}) \dot{x}^j_{\eps} -\frac{1}{2} f(x_{\eps}) \dot{\delta}_{\eps}\\
\ddot{x}^k_{\eps} &= - \Gamma^{(N) k}_{ij}(x_{\eps}) \, \dot{x}_{\eps}^i \, \dot{x}_{\eps}^j + \frac{1}{2} \delta_{\eps} h^{kl}(x_{\eps}) \frac{\partial f}{\partial x^l}(x_{\eps})
\label{eq:ro-xeq}
\end{align}
\begin{equation}\label{eq:ro-ide}
v_{\eps}(-1) = v_0, \quad \dot{v}_{\eps}(-1) = \dot{v}_0, \quad x_{\eps}(-1) = x_0, \quad \dot{x}_{\eps}(-1) = \dot{x}_0.
\end{equation}
has a unique solution on all of $\R$ for all $\eps \leq \eps_0$.
\end{Theorem}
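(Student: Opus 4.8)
The plan is to reduce the problem to a statement about the spatial equation~\eqref{eq:ro-xeq} alone, since, as already observed in the excerpt, once a global solution $x_\eps(s)$ is known, equation~\eqref{eq:ro-veq} is linear in $v_\eps$ (indeed its right-hand side does not involve $v_\eps$ at all) and can be solved by two integrations over any compact $s$-interval; hence $v_\eps$ is automatically defined wherever $x_\eps$ is. So the whole difficulty lives in showing that the second-order system for $x_\eps$ on $(N,\h)$ does not blow up in finite parameter time. Local existence and uniqueness are immediate from the Picard--Lindelöf theorem because the right-hand sides of~\eqref{eq:ro-xeq} and~\eqref{eq:ro-veq} are smooth (the Christoffel symbols, $f$, $\grad^\h f$, and $\delta_\eps$ are all $C^\infty$); the content is global existence.

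First I would isolate the key structural feature: $\delta_\eps$ is compactly supported, say $\supp(\delta_\eps)\subseteq[-\eps,\eps]$ after using condition~(1) of Definition~\ref{def:StrictDeltaNet} to arrange $\supp(\delta_\eps)\to\{0\}$. The initial data are posed at $s=-1$, which for small $\eps$ lies \emph{outside} $\supp(\delta_\eps)$. I would therefore split the $s$-axis into three regions: $s<-\eps$, the interaction zone $|s|\le\eps$, and $s>\eps$. Outside the interaction zone the forcing term in~\eqref{eq:ro-xeq} vanishes, so $x_\eps$ satisfies the free geodesic equation on $(N,\h)$; since $\h$ is \emph{complete} by hypothesis, any geodesic of $N$ extends to all parameter values, so no blow-up can occur for $s<-\eps$ or for $s>\eps$ provided the solution enters those regions with finite position and velocity. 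Thus global existence reduces to showing that the solution survives the compact interaction interval $[-\eps,\eps]$ with finite $(x_\eps(\eps),\dot x_\eps(\eps))$, after which it simply runs off as a complete geodesic of $N$.

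The heart of the proof is therefore an a priori bound on $[-\eps,\eps]$. Here I would pass to Riemannian energy quantities: let $E(s)=\tfrac12\,\h(\dot x_\eps,\dot x_\eps)$ be the kinetic energy of the spatial curve. Differentiating along the flow, the Christoffel (geodesic) part is metric-compatible and contributes nothing, leaving only the forcing term, so $\dot E = \tfrac12\,\delta_\eps(s)\,\h(\dot x_\eps,\grad^\h f)=\tfrac12\,\delta_\eps(s)\,\tfrac{d}{ds}f(x_\eps(s))$. Integrating over the interaction zone and using condition~(3), $\int_{\R}|\delta_\eps|\le K$ uniformly in $\eps$, together with control of $\grad^\h f$ along the curve, I would close a Grönwall-type estimate bounding $\sup_{[-\eps,\eps]}E$ — and hence the speed $|\dot x_\eps|_\h$ — in terms of the initial data and $K$ alone, with no blow-up as long as the curve stays in a compact set. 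Completeness of $(N,\h)$ then prevents escape to infinity: a finite speed bound over a parameter interval of length $2\eps$ confines $x_\eps$ to a bounded (hence, by completeness and Hopf--Rinow, precompact) metric ball, on which $f$, $\grad^\h f$ and the Christoffel symbols are bounded, validating the Grönwall step.

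The main obstacle I anticipate is the factor of $\delta_\eps$ itself, which is \emph{not} bounded uniformly in $\eps$ in sup-norm — it concentrates and its peak height diverges like $1/\eps$. A naive estimate using $\|\delta_\eps\|_\infty$ would degenerate as $\eps\to0$. The crucial point is that the forcing enters the \emph{energy} identity only through the $L^1$-controlled quantity $\int\delta_\eps\,\tfrac{d}{ds}f$, so it is the $L^1$ bound $K$ of condition~(3), not the $L^\infty$ size, that governs the estimate; this is exactly why a \emph{strict delta net} is the right hypothesis and why the bound is uniform enough to produce a single $\eps_0$ working for all smaller $\eps$. I would make this rigorous by a continuation/bootstrap argument: assume the maximal interval of existence inside $[-\eps,\eps]$ were strictly smaller, derive the uniform energy bound on it, conclude $(x_\eps,\dot x_\eps)$ stays in a compact subset of $TN$, and invoke the standard escape-lemma to contradict maximality — thereby extending the solution across all of $[-\eps,\eps]$ and, by the completeness argument above, to all of $\R$.
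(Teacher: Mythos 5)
Your proposal is correct, and its skeleton is the same as the paper's: dispense with the $v$-equation (it is decoupled and linear, hence solved by two integrations once $x_\eps$ is known), assume $\supp(\delta_\eps)\subseteq[-\eps,\eps]$, split $\R$ into the regions before, inside, and after the support, and use completeness of $(N,\h)$ so that outside the interaction zone the solution is a globally defined geodesic of $(N,\h)$. Where you genuinely diverge from the paper is the mechanism for crossing $[-\eps,\eps]$. The paper's Lemma~\ref{lem:IVP} treats the general coordinate ODE system~\eqref{eq:IVP_lemma} and runs a fixed-point argument --- via Weissinger's theorem (Lemma~\ref{thm:weissinger}), because the solution operator~\eqref{intop} is not a contraction --- to produce an existence time $\alpha$ in~\eqref{eq:alpha} that is independent of $\eps$; setting $\eps_0=\alpha/2$ then guarantees the solution clears the zone. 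You instead exploit the geometric structure of~\eqref{eq:ro-xeq}: metric compatibility removes the Christoffel terms from $\dot E$, where $E=\tfrac12\h(\dot x_\eps,\dot x_\eps)$, leaving $\dot E=\tfrac12\delta_\eps\,\h(\grad^{\h}\!f,\dot x_\eps)\le\tfrac12|\delta_\eps|\,|\grad^{\h}\!f|_{\h}\sqrt{2E}$, so that only the $L^1$-bound $K$ of the strict delta net enters, giving a speed bound uniform in $\eps$; Hopf--Rinow plus a continuation argument then carries the solution across the zone. Both proofs pivot on exactly the same point --- the estimate must see only $\int|\delta_\eps|$, never $\|\delta_\eps\|_\infty$ --- and both yield the uniform-in-$\eps$ bounds on $(x_\eps,\dot x_\eps)$ that are reused in the proof of Theorem~\ref{thm:comp} (c-boundedness and moderateness). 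What the paper's route buys is generality and reusability: Lemma~\ref{lem:IVP} is a pure ODE statement about $\ddot x_\eps=F_1(x_\eps,\dot x_\eps)+F_2(x_\eps)\delta_\eps$, indifferent to any geodesic or gradient structure, and its quantitative, $\eps$-independent existence interval is exactly what is quoted later. What your route buys is economy and transparency: nothing beyond Picard--Lindel\"of and an energy identity is needed, and the reason for $\eps$-uniformity is visible at a glance. Two details to nail down in a full write-up: (i) run the bootstrap on the first exit time from a compact set fixed \emph{in advance} (e.g.\ a unit neighbourhood of the compact segment traced out on $[-1,0]$ by the background geodesic with data $(x_0,\dot x_0)$), rather than on the maximal interval of existence, since the bound on $|\grad^{\h}\!f|$ along the curve must be in place before the energy estimate can be made --- as stated, your sketch is mildly circular on this point, though your continuation argument is the right way to resolve it; and (ii) the division by $\sqrt{E}$ requires the usual regularisation $\sqrt{2E+\sigma}$ with $\sigma\to0^+$ (or an application of Gr\"onwall to $1+2E$) to handle zeros of $E$.
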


The proof of this Theorem relies on a fixed point argument, which we formulate as Lemma~\ref{lem:IVP} below. The Lemma, in turn, relies on the following refinement of the Banach fixed-point theorem.

\begin{Lemma}[Weissinger's Fixed-Point Theorem~\cite{Wei:52}]
\label{thm:weissinger}
Let $(M, d)$ be a complete metric space, $X \subseteq M$ a nonempty, closed subset, and $(a_n)_n$ a sequence of positive real numbers with the property that $\sum_{n=1}^{\infty} a_n$ converges. Then a map $A \colon X \to X$ with the property that
\begin{equation}
\label{eq:Weissinger}
d(A^n(x),A^n(y)) \leq a_n \, d(x,y) \quad \forall x,y \in X, \quad \forall n \in \N,
\end{equation}
possesses a unique fixed point $\bar{x}$ in $X$, i.e.\ $\bar{x} = A(\bar{x})$.
\end{Lemma}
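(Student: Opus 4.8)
The plan is to adapt the classical proof of the Banach fixed-point theorem, with the summability of $(a_n)$ taking over the role played there by the geometric series.

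For existence, I would fix an arbitrary $x_0 \in X$ and set $x_n \coloneqq A^n(x_0)$; these all lie in $X$ because $A$ maps $X$ into itself. The crucial estimate controls consecutive iterates: since $x_{k+1} = A^k(x_1)$ with $x_1 \coloneqq A(x_0)$, hypothesis~\eqref{eq:Weissinger} gives $d(x_k, x_{k+1}) = d(A^k(x_0), A^k(x_1)) \le a_k\, d(x_0, x_1)$. Then for $m > n$ the triangle inequality yields
\[
d(x_n, x_m) \le \sum_{k=n}^{m-1} d(x_k, x_{k+1}) \le \Big( \sum_{k=n}^{\infty} a_k \Big)\, d(x_0, x_1).
\]
Because $\sum_{k=1}^{\infty} a_k$ converges, its tails vanish, so the right-hand side is small uniformly in $m$ once $n$ is large; hence $(x_n)$ is Cauchy. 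Completeness of $(M,d)$ furnishes a limit $\bar{x}$, and closedness of $X$ together with $x_n \in X$ gives $\bar{x} \in X$.

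To see that $\bar{x}$ is fixed, I would observe that the case $n=1$ of~\eqref{eq:Weissinger} makes $A$ Lipschitz, hence continuous, so $A(\bar{x}) = A(\lim_n x_n) = \lim_n x_{n+1} = \bar{x}$. For uniqueness, if $\bar{x}$ and $\bar{y}$ are both fixed, then applying~\eqref{eq:Weissinger} to the identities $A^n(\bar{x}) = \bar{x}$ and $A^n(\bar{y}) = \bar{y}$ gives $d(\bar{x},\bar{y}) \le a_n\, d(\bar{x},\bar{y})$ for every $n$; since convergence of $\sum a_n$ forces $a_n \to 0$, any $n$ with $a_n < 1$ yields $d(\bar{x},\bar{y}) = 0$.

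I do not anticipate a genuine obstacle. The one place that uses the hypothesis in full strength is the Cauchy estimate, where summability of the entire sequence $(a_n)$ is precisely what replaces the closed form $\sum_k q^k = (1-q)^{-1}$ available for a true contraction. Note that the argument never requires any fixed $a_n$ to lie below $1$ except asymptotically, which convergence of the series supplies for free; this is exactly why the result applies in situations where $A$ itself need not be a contraction but some condition on the iterates $A^n$ holds.
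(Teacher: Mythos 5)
Your proof is correct and complete: the telescoping estimate $d(x_n,x_m)\le\bigl(\sum_{k=n}^{\infty}a_k\bigr)\,d(x_0,x_1)$ via summability of $(a_n)$, the continuity of $A$ from the $n=1$ case of~\eqref{eq:Weissinger}, and uniqueness via $a_n\to 0$ are all sound, and no step needs repair. The paper itself states this lemma without proof, simply citing~\cite{Wei:52}, and your argument is exactly the standard one from that source, so there is no divergence of approach to report.
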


\begin{Lemma}
\label{lem:IVP}
Given functions $F_1 \in \mathcal{C}^{\infty}(\R^{2n},\R^n)$ and $F_2 \in \mathcal{C}^{\infty}(\R^n,\R^n)$, constants $b > 0$ and $c >0$, initial conditions $x_0, \dot{x}_0 \in \R^n$ and a strict delta net $(\delta_{\eps})_{\eps \in I}$ with $L^1$-bound $K>0$, the initial value problem
\begin{equation}
\label{eq:IVP_lemma}
\begin{split}
&\ddot{x}_{\eps} = F_1(x_{\eps},\dot{x}_{\eps}) + F_2(x_{\eps}) \delta_{\eps},\\
&x_{\eps}(-\eps) = x_0, \quad \dot{x}_{\eps}(-\eps) = \dot{x}_0, 
\end{split}
\end{equation}
has a unique solution $x_{\eps}$ on $J_{\eps} \coloneqq [-\eps,\alpha - \eps]$ with $(x_{\eps}(J_{\eps}),\dot{x}_{\eps}(J_{\eps})) \subseteq I_1 \times I_2$ where $I_1 \coloneqq \{x \in \R^n: |x -x_0| \leq b\}$, $I_2 \coloneqq \{x \in \R^n: |x -\dot{x}_0| \leq c + K \| F_2\|_{I_1,\infty}\}$ and
\begin{equation}
\label{eq:alpha}
\alpha \coloneqq \min \left(1, \frac{b}{|\dot{x}_0| + \|F_1\|_{I_1 \times I_2,\infty} + K \| F_2\|_{I_1,\infty}}, \frac{c}{\|F_1\|_{I_1 \times I_2,\infty} }\right).
\end{equation}
In particular, $x_{\eps}$ and $\dot x_\eps$ are uniformly bounded in $\eps$.
\end{Lemma}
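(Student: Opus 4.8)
The plan is to recast the second-order problem~\eqref{eq:IVP_lemma} as a fixed-point equation for an integral operator and to apply Weissinger's theorem (Lemma~\ref{thm:weissinger}) in place of the ordinary contraction principle. Introducing the momentum $p_{\eps} \coloneqq \dot x_{\eps}$ and writing $w = (x_{\eps}, p_{\eps})$, the initial value problem is equivalent to
\[
x_{\eps}(t) = x_0 + \int_{-\eps}^{t} p_{\eps}(s)\,ds, \qquad p_{\eps}(t) = \dot x_0 + \int_{-\eps}^{t} \bigl( F_1(x_{\eps},p_{\eps}) + F_2(x_{\eps})\,\delta_{\eps} \bigr)(s)\,ds,
\]
whose right-hand side defines a map $A = (A_1, A_2)$. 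I would let $A$ act on the set $Y$ of continuous curves $w \colon J_{\eps} \to \R^{2n}$ satisfying $|x(t) - x_0| \le b$ and $|p(t) - \dot x_0| \le \alpha \|F_1\|_{I_1 \times I_2,\infty} + K \|F_2\|_{I_1,\infty}$ for all $t \in J_{\eps}$; note that the latter bound in particular forces $p(t) \in I_2$. Equipped with the supremum metric, $Y$ is a nonempty closed subset of the Banach space $C(J_{\eps}, \R^{2n})$, hence a complete metric space.

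First I would check that $A$ maps $Y$ into itself, which is exactly where the value of $\alpha$ in~\eqref{eq:alpha} is used. For the momentum component, the $L^1$-bound $\int_{\R} |\delta_{\eps}| \le K$ from Definition~\ref{def:StrictDeltaNet} together with $\alpha \le c/\|F_1\|_{I_1 \times I_2,\infty}$ gives $|A_2 w(t) - \dot x_0| \le \alpha \|F_1\|_{I_1 \times I_2,\infty} + K\|F_2\|_{I_1,\infty}$, so $A_2 w$ obeys the momentum bound (and lies in $I_2$). For the position component, the momentum bound together with $\alpha \le 1$ yields $|p(s)| \le |\dot x_0| + \|F_1\|_{I_1 \times I_2,\infty} + K\|F_2\|_{I_1,\infty}$, and integrating over an interval of length at most $\alpha$ while invoking $\alpha \le b/(|\dot x_0| + \|F_1\|_{I_1 \times I_2,\infty} + K\|F_2\|_{I_1,\infty})$ gives $|A_1 w(t) - x_0| \le b$, so $A_1 w$ lands in $I_1$. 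Since $I_1$, $I_2$ and $\alpha$ depend only on the prescribed data and on $K$, and not on $\eps$, the resulting solution is automatically bounded uniformly in $\eps$, as claimed.

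It remains to verify the iterated Lipschitz hypothesis of Weissinger's theorem. As $F_1, F_2$ are smooth they are Lipschitz on the compact sets $I_1 \times I_2$ and $I_1$, with constants $L_1, L_2$ say, so the integrand $G(s,w) = (p, F_1(x,p) + F_2(x)\delta_{\eps}(s))$ obeys $|G(s,w) - G(s,\bar w)| \le L(s)\,|w - \bar w|$ with $L(s) \coloneqq 1 + L_1 + L_2 |\delta_{\eps}(s)|$. Setting $\Phi(t) \coloneqq \int_{-\eps}^{t} L(s)\,ds$, a routine induction based on the identity $\int_{-\eps}^{t} L(s)\Phi(s)^n\,ds = \Phi(t)^{n+1}/(n+1)$ shows that $d(A^n w, A^n \bar w) \le a_n\, d(w, \bar w)$ with $a_n = \Lambda^n / n!$ and $\Lambda \coloneqq \Phi(\alpha - \eps) \le (1 + L_1)\alpha + L_2 K$. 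Since $\sum_{n} \Lambda^n/n! = e^{\Lambda} < \infty$, Weissinger's theorem provides a unique fixed point $w \in Y$, which is the desired solution on $J_{\eps}$, with $\eps$-independent bounds inherited from $I_1 \times I_2$. Uniqueness among all solutions that remain in $I_1 \times I_2$ then follows from the same Lipschitz estimate via Gr\"onwall's inequality.

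The point at which care is genuinely required, and the reason the ordinary Banach fixed-point theorem does not suffice, is that the forcing term $F_2(x)\delta_{\eps}$ is not controlled in the supremum norm as $\eps \to 0$: indeed $\|\delta_{\eps}\|_{\infty}$ typically blows up, so the one-step constant $a_1$ contains the term $L_2 K$, which need not be smaller than $1$, and $A$ is in general not a contraction. The resolution is to exploit the $L^1$-bound $K$ rather than any pointwise bound on $\delta_{\eps}$: this keeps $\Lambda$ finite and, crucially, uniform in $\eps$, so that the summable sequence $\Lambda^n/n!$ required by Weissinger's theorem delivers both existence on the $\eps$-independent interval $J_{\eps}$ and the $\eps$-uniform bounds. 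Controlling the concentrating impulse through its integral instead of its supremum is thus the crux of the whole argument.
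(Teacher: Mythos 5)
Your proposal is correct and takes essentially the same approach as the paper: a fixed-point argument via Weissinger's theorem (Lemma~\ref{thm:weissinger}) on the invariant set determined by $I_1$, $I_2$ and $\alpha$, with the key point in both cases being that the impulse is controlled through its $L^1$-bound $K$ rather than its sup-norm, so that the iterated Lipschitz constants $\Lambda^n/n!$ are summable uniformly in $\eps$. The only cosmetic difference is that you iterate the single-integral operator for the first-order system $(x_\eps,\dot x_\eps)$ on $C(J_\eps,\R^{2n})$ with the sup metric, whereas the paper iterates the double-integral solution operator~\eqref{intop} on the subspace $X_\eps$ of $C^1(J_\eps,\R^n)$.
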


The proof of this statement involves estimating the solution operator of the ODE~\eqref{eq:IVP_lemma}, 
\begin{equation}\label{intop}
A_{\eps}(x_{\eps})(t) \coloneqq x_0 + \dot{x}_0(t+{\eps}) + \int\limits_{-{\eps}}^t\int\limits_{-{\eps}}^s\!
F_1(x_{\eps}(r),\dot{x}_{\eps}(r))\,d r
\, d s + \int\limits_{-{\eps}}^t\int\limits_{-{\eps}}^s\! F_2(x_{\eps}(r))\delta_{\eps}(r)\, d r \,d s.
\end{equation}
One finds that this map is not a contraction mapping on the natural solution space, i.e. the subspace $X_{\eps} \coloneqq \{x_{\eps}\in\mathcal{C}^{\infty}(J_{\eps},\R^n):
x_{\eps}(J_{\eps})\subseteq I_1, \dot{x}_{\eps}(J_{\eps})\subseteq I_2\}$ of the Banach space ${\mathcal C}^1(J_{\eps},\R^n)$ with norm $\|x \|_{\mathcal{C}^1}=\|x\|_{J_{\eps},\infty}+\|\dot x\|_{J_{\eps},\infty}$. 
However, it can be shown that it satisfies the less restrictive assumptions of Weissinger's theorem. For details, we refer to~\cite[Lem.\ 5.3]{Frauenberger},~\cite[Lem.\ A2]{SS:12}. 

\medskip
The strength of Lemma~\ref{lem:IVP} is that the time of existence of the solution, $\alpha$, is independent of the regularisation parameter $\eps$. It is precisely this fact that allows us to extend the solutions beyond the `regularisation strip' given by the support of $\delta_\eps$ provided that the latter is narrower than $\alpha$. 

\medskip
With this remark in mind, we proceed to the proof of the theorem.

\begin{proof}[Proof of Theorem~\ref{thm:reg}]
We need only deal with the $x$-equation~\eqref{eq:ro-xeq}. For simplicity, we will assume (without loss of generality) that $\supp (\delta_{\eps}) \subseteq [-\eps,\eps]$, then $\delta_{\eps}(u) = 0$ for $u \in [-1,-\eps]$, and the $x$-equation simplifies to the geodesic equation in $(N, \h)$
\begin{equation}
\ddot{x}^k_{\eps} = - \Gamma^{(N) k}_{ij}(x_{\eps}) \, \dot{x}_{\eps}^i \, \dot{x}_{\eps}^j, 
\quad x_{\eps}(-1) = x_0, \quad \dot{x}_{\eps}(-1) = \dot{x}_0,
\end{equation}
By completeness of $(N, \h)$, this problem has a unique solution $x^N_{\eps} \colon \R \to N$ which is also the solution of the $x$-equation~\eqref{eq:ro-xeq} for $u \in (-\infty, -\eps]$, i.e.\ before it reaches the regularisation zone of the wave, see Figure~\ref{fig:wz}. What remains to be found is a solution to the initial value problem
\begin{equation}
\label{eq:GEQ_N_x_proof}
\begin{gathered}
\ddot{x}^k_{\eps} = - \Gamma^{(N) k}_{ij}(x_{\eps}) \, \dot{x}_{\eps}^i \, \dot{x}_{\eps}^j + \frac{1}{2} \delta_{\eps} h^{kl}(x_{\eps}) \frac{\partial f}{\partial x^l}(x_{\eps}),\\
\quad x_{\eps}(-\eps) = x^N_{\eps}(-\eps), \quad \dot{x}_{\eps}(-\eps) = \dot{x}^N_{\eps}(-\eps).
\end{gathered}
\end{equation}
If we can establish the existence of such a solution $\hat x_\eps$ until $u = \eps$ then, for $u \in [\eps, \infty)$ (i.e.\ after the regularisation zone) the existence of a unique solution $\tilde{x}^N_{\eps} \colon \R \to N$ with the appropriate initial conditions $\tilde{x}^N_{\eps}(\eps)=x_\eps(\eps)$, $\dot{\tilde{x}}^N_{\eps}(\eps)=\dot{x}_{\eps}(\eps)$ is again guaranteed by completeness of $(N, \h)$. 

The existence of such a solution $\hat x_\eps$, however, is achieved by applying Lemma~\ref{lem:IVP} to~\eqref{eq:GEQ_N_x_proof} with $F_1(x_{\eps},\dot{x}_{\eps}) = - \Gamma^{(N) k}_{ij}(x_{\eps}) \, \dot{x}_{\eps}^i \, \dot{x}_{\eps}^j$, $F_2(x_{\eps}) = \frac{1}{2} h^{kl}(x_{\eps}) \frac{\partial f}{\partial x^l}(x_{\eps})$, and arbitrary $b,c > 0$. Therefore, we obtain a unique solution $\hat{x}_{\eps} \colon [-\eps, \alpha - \eps] \to N$. Since $\alpha$ is independent of $\eps$, we may set $\eps_0 = \frac{\alpha}{2}$ and obtain the desired solution on all of $\R$ by defining 
\begin{equation*}
x_{\eps}(u) = 
\begin{cases}
x^N_{\eps}(u),& \text{for } u \in (-\infty,-\eps]\\
\hat{x}_{\eps}(u),& \text{for } u \in (-\eps,\eps)\\
\tilde{x}^N_{\eps}(u),& \text{for } u \in [\eps,\infty)
\end{cases}
\end{equation*}
for all $\eps \leq \eps_0$.

\begin{figure}\centering
\begin{tikzpicture}[baseline={([yshift=-.8ex]current bounding box.center)},scale=0.8]
\fill [black!80,opacity=0.2] (2,1) --(4,1) -- (4,6) -- (2,6);
\draw[dotted] (2,1) node[below]{$-\eps$} -- (2,6);
\draw[dotted] (4,1) node[below]{$\eps$} -- (4,6);
\draw[dotted] (3,1) node[below]{$0$} -- (3,6);
\draw (0,6) -- (1,5) node[above,xshift = 2]{$x^N_{\eps}$} -- (2,4); 
\draw[dashed] (2,4) to[out=-45,in = 90] (3,3) node[right]{$\hat{x}_{\eps}$} to[out=-90,in=153] (4,2);
\draw (4,2) -- (5,1.5) node[above]{$\tilde{x}^N_{\eps}$} (6,1);
\fill[black] (2,4) circle (2pt);
\fill[black] (4,2) circle (2pt);
\end{tikzpicture}
\caption{The solution $x_\eps$ of the ODE~\eqref{eq:ro-xeq} is given by background solutions on $(N,\h)$ outside the regularisation zone given by the support of the strict delta net $\delta_\eps$ which is displayed in gray.}
\label{fig:wz}
\end{figure}
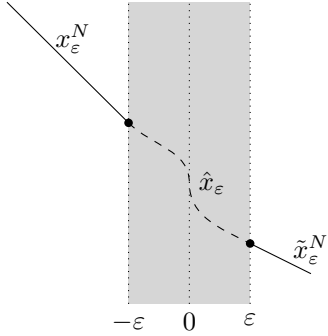

Finally we find the $v$-component of the solution for all $u\in\R$ by integration,
\[
v_{\eps}(u) = v_0 + \dot{v}_0 (u+1) + \int_{-1}^{u} \int_{-1}^t \biggl(-\delta_{\eps}(s) \frac{\partial f}{\partial x^j}(x_{\eps}(s)) \dot{x}^j_{\eps}(s) -\frac{1}{2} f(x_{\eps}(s)) \dot{\delta}_{\eps}(s)\biggr) \, ds \, dt. 
\]
This completes the proof.
\end{proof}

As mentioned at the beginning of this subsection, we have shown completeness of any chosen geodesic in the spacetimes $(M, \g_\eps)$ provided $\eps$ is sufficiently small, where the smallness required is determined by the data of the geodesic. Note that we have not shown that any of the spacetimes $(M, \g_\eps)$ for a fixed $\eps$ is complete. However, we shall now show that one has a more natural completeness result using nonlinear distributional geometry. 

\subsection{Nonlinear distributional geometry}

We now give a short but self-contained account of nonlinear distributional Lorentzian geometry. (See~\cite{KS:02a,KS:02b} and~\cite[Section 3.2]{GKOS:01}) for details.) The construction is based on Colombeau's (special) algebra of generalised functions~\cite{Col:85}, which provides an extension of Schwartz's theory of distributions~\cite{FJ:98} that has maximal consistency with respect to classical analysis. It is based on regularisation of distributions via nets of smooth functions and asymptotic estimates in terms of a regularisation parameter. 

We first define $\E(M)$ to be the set of all nets of smooth functions $(u_\eps)_{\eps\in \I}$ on a manifold $M$ that depend smoothly on $\eps$. The~\emph{algebra of generalised functions on $M$\/} is defined as the quotient $\G(M) \coloneqq 
\EM(M)/\NN(M)$ of~\emph{moderate\/} modulo~\emph{negligible\/} nets in $\E(M)$, defined via 
\[
\EM(M) \coloneqq \{ (u_\eps)_\eps\in\E(M):\, \forall K\Subset M\
\forall P\in{\mathcal P}\ \exists N:\ \sup\limits_{p\in 
 K}|Pu_\eps(p)|=O(\eps^{-N}) \}\,,\\
\]
\[\NN(M) \coloneqq \{ (u_\eps)_\eps\in\EM(M):\ \forall K\Subset M\
\forall m:\ \sup\limits_{p\in K}|u_\eps(p)|=O(\eps^{m}) \}\,.
\]
Here $K \Subset M$ means that $K$ is compact and ${\mathcal P}$ denotes the space of linear differential operators on $M$. Elements $u = [(u_\eps)_\eps]$ of $\G(M)$ are referred to as~\emph{generalised functions with representative $(u_\eps)_\eps$\/} and sums, products, and the Lie derivative of $u$ are defined componentwise (i.e., for fixed $\eps$). It can be shown that $\G(M)$ is a fine sheaf of differential algebras. \emph{Generalised tensor fields\/} on $M$ are defined as tensor fields with $\G(M)$-coefficients, i.e.
\begin{equation}
\G^r_s(M)=\G(M)\otimes_{\Cinf(M)}{\mathcal T}^r_s(M).
\end{equation}
Here ${\mathcal T}^r_s(M)$ is the space of smooth $(r,s)$-tensor fields on $M$ and $\G^r_s(M)$ is a fine sheaf of finitely generated and projective $\G$-modules.

Distributions $u\in\D'(M)$ can be embedded into $\G(M)$ via convolution (as in~\eqref{eq:conv}) giving rise to a sheaf homomorphism, $\iota$, that is consistent with the product of smooth functions. A coarser way of relating generalised functions to distribution is as follows: We say that $u, v\in \G(M)$ are~\emph{associated\/} to one another, $u \approx v$, if $u_\eps - v_\eps \to 0$ weakly. We say that $w\in \D'(M)$ is~\emph{associated with $u$\/} if $u\approx \iota(w)$.
\medskip

One benefit of this framework is that one may work componentwise (i.e. for fixed $\eps$) as in the classical smooth setting. In particular, since products are well defined, one may insert generalised vector fields and one-forms into generalised tensor fields, which is not possible in the distributional setting, cf.~\cite{Mar:67}. More explicitly we adopt the following definitions.

\begin{definition}[Generalised spacetime]
We call a $(0,2)$-tensor field $\g\in\G^0_2(M)$ a~\emph{generalised Lorentzian metric\/} if, for all small $\eps$, one (hence any) representative $\g_\eps$ consists of Lorentzian metrics and there is $m$ such that $|\det (\g_\eps)_{ij}| \geq \eps^m$ on compact sets.%
\footnote{This nondegeneracy condition is equivalently to $\det g$ being invertible in $\G$.} %
If, in addition, for all small $\eps$ one (hence any) $\g_\eps$ can be time oriented by the same smooth vector field, we call $(M, \g)$ a~\emph{generalised spacetime}.
\end{definition}

Any generalised metric induces an isomorphism between generalised vector fields and one-forms, and there is a unique Levi-Civita connection $\nabla$ corresponding to $\g$. Moreover, the curvature tensor and all its contractions can be calculated on the level of representatives using the usual expressions. Finally, the curvature defined in this way consistently extends the distributional curvature of the Geroch--Traschen setting~\cite{GT:87,LFM:07}, see~\cite{SV:09,S:08}. 
\medskip

To define the notion of geodesics of a generalised metric we need to introduce the space of generalised curves $\G[J,M]$ on an interval $J$~\emph{taking values in $M$}. Again we use a quotient of moderate modulo negligible nets $(\gamma_{\eps})_{\eps}$ of smooth curves, where moderateness (negligibility) is defined by moderateness (negligibility) of $(\psi\circ \gamma_\eps)_\eps$ for all smooth $\psi \colon M\to\R$. Moreover, $(\gamma_\eps)_\eps$ is required to be~\emph{c-bounded}, i.e., $\gamma_\eps([a,b])$ is contained in some $L \Subset M$ for $\eps$ small and all compact subintervals $[a, b]$ of $J$. Observe that no distributional analogue of such a space of curves exists and it has long been realised that regularisation is a possible remedy, cf.~\cite{Mar:67}.

The~\emph{induced covariant derivative\/} of a generalised vector field $\xi=[(\xi_\eps)_\eps]\in\G^1_0(M)$ on a generalised curve $\gamma=[(\gamma_\eps)_\eps]\in\G[J,M]$ is defined componentwise and yields a generalised vector field $\xi'$ on $\gamma$. A~\emph{geodesic\/} in a generalised spacetime is then a curve $\gamma\in\G[J,M]$ satisfying $\gamma''=0$. Equivalently we have the usual local expressions, i.e., 
\begin{equation}\label{geo}
\Big[\,\Big(\frac{d^2\gamma_\eps^k}{d\lambda^2}
+\sum_{i,j}{\Gamma_\eps}^k_{ij}\frac{d\gamma_\eps^i}{d\lambda}\frac{d\gamma_\eps^j}{d\lambda}\Big)_\eps\,\Big]
=0,
\end{equation}
where $\Gamma^k_{ij}=[({\Gamma_\eps}^k_{ij})_\eps]$ denote the Christoffel symbols of $\g=[(\g_\eps)_\eps]$. Finally we define following~\cite[Def.~2.1]{SS:15}:
\begin{definition}[Geodesic completeness]
A generalised spacetime $(M, \g)$ is~\emph{geodesically complete\/} if any of its geodesics $\gamma$ can be defined on all of $\R$, i.e. $\gamma\in\G[\R,M]$.
\end{definition}

\subsection{INPWs as generalised spacetimes \& their geodesic completeness}

We next define the INPWs as generalised spacetimes. We refer to a generalized function $D \in \mathcal{G}(\R)$ that can be represented by a strict delta net $(\delta_{\eps})_{\eps \in (0,1]}$, i.e., $ D =[(\delta_{\eps})_{\eps}] \in \mathcal{G}(\R)$ , as a~\emph{generalized delta function}.

\begin{definition}[INPW]
\label{def:INPW}
A generalized spacetime $(M,\g)$ is called an \emph{impulsive NPW} or \emph{INPW} if $M = N \times \R^2_1$, $(N, \h)$ an $(n-2)$-dimensional Riemannian manifold and its metric is of the form
\begin{equation}
\label{eq:gINPW_metric}
\g = 2 du \, dv + f(x)D(u) \, du^2 + \h,
\end{equation}
where $D\in \mathcal{G}(\R)$ is a generalized delta function.
\end{definition}

\begin{remark}
Recall that the embedding $\iota$ via convolution mentioned above depends on several choices such as the $\delta_\eps$, a selection of charts, as well as a partition of unity and a family of cut-off functions~\cite[Thm.\ 3.2.10]{GKOS:01}. Therefore, instead of embedding an INPW into $\G^0_2(M)$ via $\iota$, we have chosen to define it via the entire class of strict $\delta$-nets, which is diffeomorphism invariant.
\end{remark}

Obviously a generalised $\g$ which is an INPW has an rINPW $\g_\eps$ as a representative. Consequently the geodesic equations for such a $\g$ are given on the level of representatives by~\eqref{geodeqns-delta-eps}. Before exploiting this fact, we briefly explain what it means to solve a differential equation in generalised functions.

To establish existence and uniqueness of a solution $\gamma \in \mathcal{G}[J,M]$ to the geodesic equation $\gamma''=0$ with given initial data $\gamma(t_0) = x_0$, $\gamma'(t_0) = x'_0$ (with $x_0$, $x'_0\in\R$) one proceeds in three steps. First, one needs to provide a~\emph{solution candidate}, i.e., a net of smooth functions $\gamma_{\eps} \colon J \to M$ solving the corresponding initial value problem on the level of representatives. (In our case, written in local coordinates, the equations~\eqref{eq:ro-veq}, \eqref{eq:ro-xeq} with~\eqref{eq:ro-ide}.) %
Secondly, we must check that the solution candidate $\gamma_{\eps} \colon J \to M$ is a representative of a generalised curve $\gamma = [(\gamma_{\eps})_{\eps}] \in \mathcal{G}[J,M]$, i.e. one needs to establish that $(\gamma_{\eps})_{\eps}$ is compactly bounded and moderate. %
Thirdly, to show uniqueness, we need to establish that solutions to the negligibly perturbed initial value problem differ only negligibly from $(\gamma_{\eps})_{\eps}$. 

Since we are ultimately interested in a completeness result, we must also establish that the solutions to the geodesic equation are actually defined on all of $\R$. This will follow from the fact that there are solution candidates defined on $\R$ as asserted by Theorem~\ref{thm:reg}. In total, we will prove the following result:

\begin{Theorem}[Completeness of INPWs]
\label{thm:comp}
For any impulsive $N$-fronted wave with parallel rays $(M,\g)$ with complete Riemannian part $(N, \h)$ the initial value problem for the geodesics is uniquely solvable in $\mathcal{G}[\R,M]$. In particular, any INPW is geodesically complete. 
\end{Theorem}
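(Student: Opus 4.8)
The plan is to follow the three-step programme for solving ODEs in the Colombeau setting that was outlined in the text, using Theorem~\ref{thm:reg} to supply the crucial input that the representative-level solutions exist globally on $\R$. As noted, an INPW has an rINPW $\g_\eps$ as a representative, so its geodesic equations are given on the level of representatives by the system~\eqref{eq:ro-veq}--\eqref{eq:ro-xeq}, and I only ever need to handle the $x$-equation~\eqref{eq:ro-xeq} directly, since the $v$-equation~\eqref{eq:ro-veq} is decoupled and linear and is solved afterwards by double integration. First I would produce the \emph{solution candidate}: for each $\eps$, Theorem~\ref{thm:reg} gives a unique smooth solution $x_\eps \colon \R \to N$ (pieced together from the background geodesic before the strip, the Lemma~\ref{lem:IVP} solution $\hat x_\eps$ across the strip, and the background geodesic afterwards) together with the integrated $v_\eps$, all defined on all of $\R$ for $\eps \le \eps_0$. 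This net $(\gamma_\eps)_\eps = (x_\eps, v_\eps)_\eps$ is the candidate representative.

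Secondly, I would verify that $(\gamma_\eps)_\eps$ defines a genuine element of $\G[\R, M]$, i.e.\ that it is c-bounded and moderate. For c-boundedness, the key observation is the final clause of Lemma~\ref{lem:IVP}: $x_\eps$ and $\dot x_\eps$ are uniformly bounded in $\eps$ on the strip, and outside the strip $x_\eps$ follows background geodesics of the complete manifold $(N,\h)$ with $\eps$-uniformly bounded initial data, so on any compact subinterval $[a,b] \subset \R$ the curves $x_\eps([a,b])$ stay inside a fixed compact $L \Subset N$. For moderateness I would estimate, componentwise and with all their $\lambda$-derivatives, the growth of $x_\eps$, $v_\eps$ in $\eps$: the $x$-part inherits moderate bounds from the c-bounded solution and the smooth geometric data on the compact set $L$, while the $v$-part picks up the worst growth through the $f(x_\eps)\dot\delta_\eps$ term in~\eqref{eq:ro-veq}. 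Since $(\delta_\eps)_\eps$ is a strict delta net, its derivatives grow only polynomially in $1/\eps$, so the double integral defining $v_\eps$ remains $O(\eps^{-N})$ for some $N$, giving moderateness.

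Thirdly, for uniqueness in $\G[\R,M]$, I would show that if $(\tilde\gamma_\eps)_\eps$ solves the geodesic system with negligibly perturbed data and negligibly perturbed right-hand side, then $\tilde\gamma_\eps - \gamma_\eps$ is negligible. This is a stability-of-solutions estimate: subtracting the two integral equations of the form~\eqref{intop}, applying a Gronwall-type argument on each compact subinterval (again splitting into the region inside and outside the regularisation strip), and using the uniform $\eps$-bounds from Lemma~\ref{lem:IVP} to control the Lipschitz constants of $F_1, F_2$ on the relevant compact set, one converts the $O(\eps^m)$ smallness of the data and the perturbation into $O(\eps^m)$ smallness of the difference, for every $m$. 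Global definedness on $\R$ is then immediate, since the candidate is defined on all of $\R$, and this yields geodesic completeness per the definition.

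The main obstacle I anticipate is the moderateness and uniqueness estimates for the $v$-component, because~\eqref{eq:ro-veq} contains the derivative $\dot\delta_\eps$ of the delta net, which blows up faster in $1/\eps$ than $\delta_\eps$ itself. One must check carefully that the double integration in $v_\eps$, combined with the strict-delta-net $L^1$-bound $K$ and the polynomial control on derivatives of $\delta_\eps$, keeps everything moderate and that the same control survives differentiation in $\lambda$ and the Gronwall comparison in the uniqueness step; the spatial part, by contrast, is comparatively benign thanks to the $\eps$-independent existence time $\alpha$ furnished by Lemma~\ref{lem:IVP}.
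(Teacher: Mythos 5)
Your proposal follows essentially the same route as the paper's proof: the solution candidate is supplied by Theorem~\ref{thm:reg}, c-boundedness and moderateness are obtained from the uniform bounds of Lemma~\ref{lem:IVP} together with (inductive differentiation of) the ODE system, uniqueness is a Gronwall-type stability estimate for the integral equations (the paper invokes Bykov's extension of Gronwall's inequality to handle the combined single- and double-integral terms), and the $v$-component is handled last by integration, with negligibility preserved because a negligible net times the moderate net $\dot\delta_\eps$ is negligible. The one imprecision is attributing the polynomial $1/\eps$-bounds on derivatives of $\delta_\eps$ to the strict-delta-net property (Definition~\ref{def:StrictDeltaNet} controls only supports and $L^1$-norms), but these bounds do hold in the setting of the theorem since $D=[(\delta_\eps)_\eps]$ must be a moderate net to define an element of $\G(\R)$, so your argument goes through as intended.
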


\begin{proof}
Given initial data, by Theorem~\ref{thm:reg}, we can find $\eps_0$ such that the regularised initial value problem~\eqref{eq:ro-veq},~\eqref{eq:ro-xeq},~\eqref{eq:ro-ide} has a unique (smooth) solution $(x_{\eps},v_{\eps})$ that is defined on all of $\R$. We may extend this net arbitrarily yet smoothly for $\eps_0 < \eps \leq 1$ to obtain a net in $\mathcal{C}^{\infty}(\R,N\times\R)^{(0,1]}$ as a solution candidate.

According to the above discussion, we need to show that:
\begin{itemize}
\item[(a)] $[(x_{\eps},v_{\eps})_{\eps}]$ is an element of $\mathcal{G}[\R,N\times \R]$, i.e. $(x_{\eps},v_{\eps})_{\eps}$ is compactly bounded and moderate; 
\item[(b)] negligibly perturbing the equations~\eqref{eq:ro-veq},~\eqref{eq:ro-xeq},~\eqref{eq:ro-ide} leads to solutions that differ negligibly from the solution candidate $(x_{\eps},v_{\eps})_{\eps}$.
\end{itemize}

\noindent
(a) By Lemma~\ref{lem:IVP}, $x_\eps$ is uniformly bounded in $\eps$ on the regularisation zone and, by construction, independent of $\eps$ outside it. The same holds true for $\dot x_\eps$, so both are of order $O(1)$ in $\eps$. This allows us to use the differential equation~\eqref{eq:ro-xeq} to see that $\ddot x_\eps$ is $O(1/\eps)$ since the $\delta$-net is of that order. Now inductively differentiating~\eqref{eq:ro-xeq} we obtain order $O(1/\eps^n)$-estimates for the higher derivatives of $x_\eps$. Therefore, the net $(x_\eps)_\eps$ is c-bounded and moderate. Then $v_{\eps}$ is given by integrating~\eqref{eq:ro-veq}, i.e. by the integral of moderate and compactly bounded nets, and hence is moderate and compactly bounded itself. Therefore, $[(x_{\eps},v_{\eps})_{\eps}]$ is an element of $\mathcal{G}[\R,N\times\R]$.
\medskip

\noindent
(b) To prove uniqueness of solutions in $\mathcal{G}[\R,N\times\R]$ we assume
that there is another solution $(\tilde{x},\tilde{v}) \in \mathcal{G}[\R,N \times \R]$ of the initial avlue problem, i.e., that there is a net $(\tilde{x}_{\eps},\tilde{v}_{\eps})_{\eps}$ representing $(\tilde{x},\tilde{v}) = [(\tilde{x}_{\eps},\tilde{v}_{\eps})_{\eps}]$ and solving the initial value problem
\begin{equation}
\label{eq:GEQ_reg_tilde}
\begin{gathered}
\ddot{\tilde{v}}_{\eps} = -\delta_{\eps} \frac{\partial f}{\partial x^j}(\tilde{x}_{\eps}) \dot{\tilde{x}}^j_{\eps} -\frac{1}{2} f(\tilde{x}_{\eps}) \dot{\delta}_{\eps} + a_{\eps}\\
\ddot{\tilde{x}}^k_{\eps} = - \Gamma^{(N) k}_{ij}(\tilde{x}_{\eps}) \, \dot{\tilde{x}}_{\eps}^i \, \dot{\tilde{x}}_{\eps}^j + \frac{1}{2} \delta_{\eps} h^{kl}(\tilde{x}_{\eps}) \frac{\partial f}{\partial x^l}(\tilde{x}_{\eps}) + b_{\eps}^k\\
\tilde{v}_{\eps}(-1) = v_0 + c_{\eps}, \quad \dot{\tilde{v}}_{\eps}(-1) = \dot{v}_0 + \dot{c}_{\eps}, \quad \tilde{x}_{\eps}(-1) = x_0 + d_{\eps}, \quad \dot{\tilde{x}}_{\eps}(-1) = \dot{x}_0 + \dot{d}_{\eps},
\end{gathered}
\end{equation}
where $(a_{\eps})_{\eps}$, $(b_{\eps})_{\eps} \in \mathcal{N}[\R,N]$ are negligible functions and $(c_{\eps})_{\eps}$, $(\dot{c}_{\eps})_{\eps}$, $(d_{\eps})_{\eps}$, $(\dot{d}_{\eps})_{\eps}$ are negligible constants, i.e., $|c_\eps|=O(\eps^m)$ for any $m$ and likewise for the other constants.

We need to show that both $(x_{\eps}-\tilde{x}_{\eps})_{\eps}$ and $(v_{\eps}-\tilde{v}_{\eps})_{\eps}$ are negligible.
To this end, we first estimate $(x_{\eps}-\tilde{x}_{\eps})$ and $(\dot{x}_{\eps}-\dot{\tilde{x}}_{\eps})$ using the integral versions of the differential equations in the notation of Lemma~\ref{lem:IVP} for $F_1(x,\dot{x}) = - \Gamma^{(N) k}_{ij}(x) \, \dot{x}^i \, \dot{x}^j$ and $F_2(x) = \frac{1}{2} h^{kl}(x) \frac{\partial f}{\partial x^l}(x)$. We first observe that the terms arising from the negligible perturbation $b_\eps$ only give rise to negligible terms. More precisely, we have that $\forall T>0$, $\forall q \in \N$, there are constants $K_1, K_2 >0$ such that, for sufficiently small $\eps$ and all $t \in [-T,T]$, 
\begin{equation}
|d_{\eps}| + |t \, \dot{d}_{\eps}| + \left|\int_{-1}^t \int_{-1}^s b^k_{\eps}(r) \, dr \, ds \right| \leq K_1 \, \eps^q,
\quad
|\dot{d}_{\eps}| + \left|\int_{-1}^t b^k_{\eps}(s)\, ds \right| \leq K_2 \, \eps^q.
\end{equation}
So we find again for all $T>0$ and all exponents $q \in \N$ that there are constants $K_1$, $K_2 >0$ such that for all small $\eps$ and all $t \in [-T,T]$
\begin{equation*}
\begin{split}
|x_{\eps}(t)-\tilde{x}_{\eps}(t)| \leq& K_1 \eps^q + \int_{-1}^t \int_{-1}^s |F_1(x_{\eps}(r),\dot{x}_{\eps}(r)) - F_1(\tilde{x}_{\eps}(r),\dot{\tilde{x}}_{\eps}(r))|\, dr \, ds \\
& \qquad \qquad + \int_{-1}^t \int_{-1}^s |F_2(x_{\eps}(r)) - F_2(\tilde{x}_{\eps}(r))| \, |\delta_{\eps}(r)|\, dr \, ds \\ \leq
& K_1 \eps^q + L_1 \int_{-1}^t \int_{-1}^s |x_{\eps}(r) - \tilde{x}_{\eps}(r)| + |\dot{x}_{\eps}(r) - \dot{\tilde{x}}_{\eps}(r)|\, dr \, ds \\
& \qquad \qquad + L_2 \int_{-1}^t \int_{-1}^s |x_{\eps}(r) - \tilde{x}_{\eps}(r)| \, |\delta_{\eps}(r)|\, dr \, ds ,
\end{split}
\end{equation*}
and
\begin{equation*}
\begin{split}
|\dot{x}_{\eps}(t)-\dot{\tilde{x}}_{\eps}(t)| \leq& K_2 \eps^q + \int_{-1}^t |F_1(x_{\eps}(s),\dot{x}_{\eps}(s)) - F_1(\tilde{x}_{\eps}(s),\dot{\tilde{x}}_{\eps}(s))|\, ds \\
& \qquad \qquad + \int_{-1}^t |F_2(x_{\eps}(s)) - F_2(\tilde{x}_{\eps}(s))| \, |\delta_{\eps}(s)| \, ds \\ \leq
& K_2 \eps^q + L_1 \int_{-1}^t |x_{\eps}(s) - \tilde{x}_{\eps}(s)| + |\dot{x}_{\eps}(s) - \dot{\tilde{x}}_{\eps}(s)| \, ds \\
& \qquad \qquad + L_2 \int_{-1}^t |x_{\eps}(s) - \tilde{x}_{\eps}(s)| \, |\delta_{\eps}(s)| \, ds .
\end{split}
\end{equation*}
Adding these two inequalities and defining $\psi(t) \coloneqq |x_{\eps}(s) - \tilde{x}_{\eps}(s)| + |\dot{x}_{\eps}(s) - \dot{\tilde{x}}_{\eps}(s))|$ we have
\begin{equation*}
\psi(t) \leq (K_1 +K_2) \eps^q + \int_{-1}^t (L_1 +L_2|\delta_{\eps}(s)|)\psi(s) \, ds + \int_{-1}^t \int_{-1}^s (L_1+L_2 |\delta_{\eps}(r)|)\psi(r)\, dr \, ds. 
\end{equation*}
It then follows from Bykov's extension of Gronwall's inequality~\cite[Thm.~11.1]{BS:92} that
\[
\psi(t) \leq (K_1 +K_2) \eps^q \exp\biggl( \int_{-1}^t (L_1 +L_2|\delta_{\eps}(s)|) \, ds + \int_{-1}^t \int_{-1}^s (L_1+L_2 |\delta_{\eps}(r)|)\, dr \, ds \biggr) \leq K_3 \eps^q,
\]
with $K_3$ explicitly given by $K_3= (K_1 + K_2)\exp((T+1)L_1 + KL_2 + (T+1)^2 L_1 + (T+1)KL_2)$ where $K$ is the $L^1$-bound of $(\delta_{\eps})_{\eps}$.
This establishes negligibility of $(x_{\eps}-\tilde{x}_{\eps})_{\eps}$ by~\cite[Thm. 1.2.3]{GKOS:01}. Since $(v_{\eps}-\tilde{v}_{\eps})$ is essentially obtained by integrating $(x_{\eps}-\tilde{x}_{\eps})$, we conclude that it is also negligible.
\end{proof}

Finally, we remark that the above result establishes the failure of the Ehlers--Kundt conjecture in the impulsive case.

\begin{Corollary}
The Ehlers--Kundt conjecture fails for INPWs which are geodesically complete irrespectible of the (smooth) profile function $f$.
\end{Corollary}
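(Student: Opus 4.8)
The plan is to obtain the Corollary as an immediate consequence of Theorem~\ref{thm:comp} by exhibiting an explicit counterexample to the \emph{conclusion} of the Ehlers--Kundt conjecture. Recall that the conjecture asserts that a geodesically complete, Ricci-flat pp-wave is necessarily a plane wave, i.e.\ that its spatial profile is an at most quadratic polynomial. To refute this in the impulsive regime, I would specialise the INPW framework to the genuine pp-wave case by taking $(N, \h) = (\R^2, dx^2 + dy^2)$, which is a complete Riemannian manifold, so that Theorem~\ref{thm:comp} applies directly.

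First I would record the vacuum condition. Since $\mathbf{Ric}_{\h} = 0$ for the flat plane, the distributional Ricci tensor $\mathbf{Ric}_{\g} = -\tfrac{1}{2}\,\delta(u)\,\bigtriangleup_{\h} f(x) + \mathbf{Ric}_{\h}$ vanishes precisely when $f$ is harmonic on $\R^2$; thus every harmonic profile $f$ yields a Ricci-flat impulsive pp-wave. By Theorem~\ref{thm:comp}, the associated INPW, viewed as a generalised spacetime, is geodesically complete in $\G[\R, M]$ for \emph{any} smooth $f$, and in particular for any harmonic one. It then remains only to choose $f$ harmonic but \emph{not} an at most quadratic polynomial: the harmonic polynomials $\mathrm{Re}\,(x+iy)^k$ and $\mathrm{Im}\,(x+iy)^k$ with $k \ge 3$ (for instance $f(x,y) = x^3 - 3xy^2$), or non-polynomial entire harmonic functions such as $f(x,y) = e^{x}\cos y$, each furnish a Ricci-flat, geodesically complete impulsive pp-wave whose spatial profile fails to be at most quadratic. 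Every such example is therefore a complete vacuum pp-wave that is not a plane wave, directly contradicting the conjectured characterisation.

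The step requiring the most care is not the construction but its interpretation, and this is where I expect the genuine content of the Corollary to lie. The Ehlers--Kundt conjecture is classically formulated for profiles $H$ that are $C^1$ in $u$, whereas here $H = f(x)\,\delta(u)$ is distributional in $u$ and lies outside the Geroch--Traschen class. I would therefore emphasise that the impulsive wave is a bona fide impulsive limit of smooth sandwich waves, that its curvature is the well-defined distributional object displayed above (computable without any ill-defined products owing to the high symmetry), and that geodesic completeness has been established rigorously in the Colombeau framework via Theorem~\ref{thm:comp} rather than merely formally. With these identifications in place, the examples above show that geodesic completeness ceases to single out plane waves as soon as one leaves the classically regular regime in which Flores and S\'anchez work, which is exactly the asserted failure of the conjecture in the impulsive case.
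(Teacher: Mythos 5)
Your proposal is correct and takes essentially the same route as the paper, which states the Corollary as an immediate consequence of Theorem~\ref{thm:comp}: completeness holds for \emph{every} smooth profile $f$, so completeness cannot single out the at most quadratic (plane-wave) profiles among the Ricci-flat impulsive waves. Your explicit specialisation to $(N,\h)=(\R^2,\delta_{ij}\,dx^i\,dx^j)$, the observation that Ricci-flatness amounts to harmonicity of $f$, and the choice of a harmonic non-quadratic profile such as $f(x,y)=x^3-3xy^2$ simply make explicit the details that the paper leaves implicit.
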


Conceptually, completeness ceases to be a rigidity criterion in the presence of a distributional wave profile. This not only settles the natural ``impulsive EK'' analogue in the negative, but also underscores the subtle role played by regularity in the geometric and dynamical characterizations of exact wave spacetimes.%

\subsection*{Acknowledgment}
This article is based on the Masters thesis of the first author~\cite{Frauenberger}. J.G and R.S. are very grateful to Annegret Burtscher, Jos\'e Luis Flores, Lilia Mehidi, and A. Shadi Tahvildar-Zadeh for providing such an engaging atmosphere during the BIRS-IMAG workshop ``Geometry, Analysis, and Physics in Lorentzian Signature'' held in Granada, Spain in May 2025.

This research was funded in part by the Austrian Science Fund (FWF) [Grant DOI 10.55776/EFP6]. For open access purposes, the authors have applied a CC BY public copyright license to any author accepted manuscript version arising from this submission.

% \subsection*{Data availability satement} Not applicable to this article because it is based on purely theoretical considerations,
% without using any datasets or other materials.
% 
% 
% \subsection*{Conflict of interest statement}
% On behalf of all authors, the corresponding author (R.S.) states that there is no conflict of interest.
% 

\providecommand{\bysame}{\leavevmode\hbox to3em{\hrulefill}\thinspace}

%\bibliographystyle{abbrv}

%\bibliography{references,ro}
%\printbibliography[heading=bibintoc,title={References}]

\end{document}